%% file: main.tex
\documentclass[letter, 10pt, conference]{./cls/ieeeconf}      % Use this line for a4 paper

\IEEEoverridecommandlockouts                              % 
\usepackage{array,multicol,multirow,booktabs,longtable,tabularx}
\usepackage{subcaption}

\usepackage{xspace,algorithmic,algorithm}

\usepackage{amsmath,amsfonts}

\usepackage{amsthm}
\usepackage{amssymb,mathtools,dsfont,bbm,cuted,stmaryrd}

\usepackage{graphicx}      % include this line if your document contains figures

\usepackage{caption}

\usepackage{xcolor}

\usepackage[algo2e, linesnumbered, algoruled]{algorithm2e}

\usepackage{pgfplots}
\usepackage{makecell}

\usepackage{placeins}
\usepackage{cuted}

\let\labelindent\relax
\usepackage{enumitem}

\newcommand{\be}{\begin{equation}}
\newcommand{\ee}{\end{equation}}

\def\bal#1\eal{\begin{align}#1\end{align}}
\def\baln#1\ealn{\begin{align*}#1\end{align*}}

\newcommand{\ben}{\begin{equation*}}
\newcommand{\een}{\end{equation*}}

\newcommand{\re}[1]{\mathbb{R}^{#1}}%

\newcommand{\bbm}{\begin{bmatrix}}
\newcommand{\ebm}{\end{bmatrix}}

\newcommand{\bBm}{\begin{Bmatrix}}
\newcommand{\eBm}{\end{Bmatrix}}

\newcommand{\bvm}{\begin{vmatrix}}
\newcommand{\evm}{\end{vmatrix}}

\newcommand{\bVm}{\begin{Vmatrix}}
\newcommand{\eVm}{\end{Vmatrix}}

\newcommand{\bpm}{\begin{pmatrix}}
\newcommand{\epm}{\end{pmatrix}}

\newcommand{\bnm}{\begin{matrix}}
\newcommand{\enm}{\end{matrix}}

\newcommand{\bi}{\begin{itemize}}
\newcommand{\ei}{\end{itemize}}

\newcommand{\bse}{\begin{subequations}}
\newcommand{\ese}{\end{subequations}}

\DeclareMathOperator{\diag}{diag}
\theoremstyle{plain}

\newtheorem{prop}{Proposition}
\newtheorem{cor}{Corollary}

\newtheorem{df}{Definition}
\newtheorem{rem}{Remark}

\newcommand{\eor}{\ensuremath{\hfill\blacklozenge}}
\def\qed{\hfill\vrule height 1.6ex width 1.5ex depth -.1ex}

\title{\LARGE \bf Inclusion conditions for the Constrained Polynomial Zonotopic case.}

\author{Bogdan Gheorghe$^{1,\star}$, Amr Alanwar$^{2}$ and Florin Stoican$^{1}$% <-this % stops a space
\thanks{$^{1}$Bogdan Gheorghe and Florin Stoican are with Dept. of Automatic Control and Systems Engineering, RePlan team, CAMPUS Research Institute, Politehnica Bucharest, Romania 
        {\tt\small\{bogdan.gheorghe1807, florin.stoican\}@upb.ro}}%
\thanks{$^{2}$Amr Alanwar is with School of Computation, Information and Technology, Technical University of Munich, Heilbronn, 74076, Germany.
        {\tt\small alanwar@tum.de}}%
\thanks{$^\star$ The work of the first author was done while visiting the Technical University of Munich, Campus Heilbronn}%
}

\usepackage{hyperref}

\begin{document}

\maketitle

\begin{abstract}
Set operations are well understood for convex sets but become considerably more challenging in the non-convex case due to the loss of structural properties in their representation. Constrained polynomial zonotopes (CPZs) offer an effective compromise, as they can capture complex, typically non-convex geometries while maintaining an algebraic structure suitable for further manipulation. 

Building on this, we propose novel nonlinear encodings that provide sufficient conditions for testing inclusion between two CPZs and adapt them for seamless integration within optimization frameworks.
\end{abstract}

% %%Graphical abstract
% \begin{graphicalabstract}
% %\includegraphics{grabs}
% \end{graphicalabstract}

% %%Research highlights
% \begin{highlights}
% \item Research highlight 1
% \item Research highlight 2
% \end{highlights}

\begin{keywords}
constrained polynomial zonotope; inclusion test;
%% keywords here, in the form: keyword \sep keyword

%% PACS codes here, in the form: \PACS code \sep code

%% MSC codes here, in the form: \MSC code \sep code
%% or \MSC[2008] code \sep code (2000 is the default)

\end{keywords}

\input{chapters/1_introduction}
% \clearpage

% \newpage
\input{chapters/2_preliminaries}

% \clearpage
\input{chapters/3_inclusion_conPolyZon}
\input{chapters/4_results}

\input{chapters/5_conclusions}

% Generated by IEEEtran.bst, version: 1.14 (2015/08/26)

\end{document}

%% file: chapters/1_introduction.tex
\section{Introduction}

% Computing the Maximal Positive Invariant set (MPI)

% 
Set-based representations are widely utilized in reachability analysis and fault detection, with notable examples including polytopes~\cite{grunbaum1967convex}, ellipsoids~\cite{kurzhanski2000ellipsoidal}, zonotopes~\cite{kuhn1998zonotope}, and support functions \cite{reachsupport}. Among these, ellipsoids and zonotopes stand out for their computational efficiency in performing fundamental set operations critical to reachability analysis. 

Ellipsoids, defined as affine transformations of the unit 2-norm ball in Euclidean space provide exact representations of Gaussian confidence regions, making them particularly well-suited for modeling stochastic uncertainties. Nevertheless, ellipsoids are not closed under key operations such as the Minkowski sum and intersection, which constrains their use in more complex problems. In contrast, zonotopes possess desirable closure properties under Minkowski sum, linear transformation, and conservative intersection, offering complementary strengths for set-based estimation. There are different variants of zonotopes like hybrid zonotopes~\cite{bird2023hybrid}, logical zonotopes~\cite{alanwar2023logical}, sparse polynomial zonotopes~\cite{kochdumper2020sparse}, and
hybrid polynomial zonotopes~\cite{xie2025hybrid}. Each of these variants has been developed to address specific limitations of conventional zonotopes in representing complex, nonlinear, logical, or hybrid uncertainty structures. 

To exactly describe set intersections, \cite{scott2016constrained} introduced the notion of constrained zonotopes, later extended to the polynomial case in \cite{kochdumper_constrained_2023}. This expressiveness comes at the cost of higher descriptive complexity - namely, an increased number of generators and larger matrix dimensions - making efficient inclusion checking a crucial yet challenging task. Despite significant progress, the problem remains open.  
Linear encodings for polyhedral and zonotopic sets were proposed in~\cite{Sadraddini_Tedrake_2019} and later extended to constrained zonotopes in~\cite{raghuraman_set_2020,Gheorghe_Stoican_Prodan_2024}. Notably, while not exact, in the sense that the test may return false even when inclusion holds, the proposed conditions have been empirically shown to be non-conservative \cite[Sec. IV]{Sadraddini_Tedrake_2019}.

In this context, the present work:
\begin{enumerate}[label=\roman*)]
    \item fills a gap by introducing nonlinear encodings, linear in the equality constraints and nonlinear in the inequalities, for testing constrained polynomial zonotope inclusion;
    \item proposes an adaptation that avoids the explicit use of the absolute value operator, yielding a formulation that remains nonlinear but integrates more naturally within optimization frameworks;
    % \item demonstrates the effectiveness of the proposed methods through the computation of the maximal invariant set for a discrete-time linear dynamical system, a demanding task involving repeated matrix multiplications, set intersections, and inclusion checks.
\end{enumerate}

% \vfill

The rest of the paper is organized as follows. Section~\ref{sec:preliminaries} introduces the notion of constrained polynomial zonotope. Section~\ref{sec:inclusion_cpz} defines the sufficient conditions for CPZ inclusion. The validation of the theoretical results are shown in Section~\ref{sec:results}, while Section~\ref{sec:conclusions} draws the conclusions.

\subsection*{Notations}
% \noindent\emph{Notation}:
$\mathbb N_0$ is the set of whole numbers. $O_{m \times n}\in \mathbb R^{m\times n}$ is the matrix with $m$ rows and $n$ columns whose entries are zero. Whenever $m=n$, we use the shorthand notation $O_{n}$. The identity matrix is denoted by $I_n\in \mathbb R^{n\times n}$, and the symbol $\mathbf{1}_n$ represents the column vector of $n$ values of one. For an arbitrary matrix $G\in \mathbb R^{m\times n}$, $G^\top \in \mathbb R^{n\times m}$ is its transpose matrix, $G^\dagger \in \mathbb R^{n\times m}$ its Moore-Penrose pseudoinverse, $G(i, \cdot) \in \mathbb R^{1\times n}$ its i-th row, and $G(\cdot, j)  \in \mathbb R^{m\times 1}$ its j-th column. $|A|$, with $A \in \mathbb R^{m\times n}$, represents the element-wise absolute value of $A$. For a vector $x \in \mathbb R^{n}$, its $1$ norm is given as $\|x\|_{1} := \sum_{i=1}^{i=n}|x_i|$ and its infinity norm as $\|x\|_{\infty} := \max(|x_1|, |x_2|, \ldots, |x_n|)$. Any order operations (e.g., ``$\leq$'' or ``$\geq$'') or operator (e.g., ``$\log (\cdot)$'') are taken element-wise over the vector(s) considered. The function $\diag(v)$ constructs a square matrix, with the elements of the vector $v$ on its main diagonal and zeros elsewhere. % The function \textit{diag($v$)} creates the square matrix $G$ that has on its diagonal the vector $v$, and zero otherwise.For two sets, $X$ and $Y$, their Minkowski sum is defined as $X\oplus Y=\{x+y:\: \forall x \in X, \forall y\in Y\}$, and their Pontryagin difference is defined as $X\ominus Y=\{x\in X:\: x + y \in X, \forall \, y \in Y\}$.

%% file: chapters/2_preliminaries.tex
\section{Preliminaries on zonotopes}
\label{sec:preliminaries}
% \subsection{Data-driven methods}

% \newpage

% \subsection{Zonotopic representations}
% To do:
% \begin{itemize}
%     \item \textcolor{red}{Do the figure with lambda}
% \end{itemize}

We introduce the notion of constrained polynomial zonotope \cite{kochdumper_constrained_2023} and its subsequent simplifications, to characterize the sets appearing later in the paper. 

\begin{df}
A compact set $\mathcal{P} \subset \re{d}$ is a constrained polynomial zonotope (CPZ) if there exists $c \in \mathbb R^{d}, G \in \mathbb R^{d \times n}, E \in \mathbb N_0^{s \times n}, F \in \mathbb \re^{p\times q}, \theta \in \mathbb R^{p}, R \in \mathbb N_0^{s \times q}$ such that
\begin{align}
\label{eq:con_poly_zono}
    \nonumber\mathcal{P} = & \bigl\langle c, G, E, F, \theta, R\bigr\rangle_{CPZ}\\
    =&\bigg \{ x\in \mathbb R^{d}:\: x = c + \sum_{i=1}^{n} \bigg( \prod_{k=1}^s \lambda^{E{(k,i)}}(k)\bigg) G(\cdot, i), \nonumber\\ 
    &\sum_{i=1}^{q}\bigg(\prod_{k=1}^s \lambda^{R(k,i)}(k) \bigg) F(\cdot, i) = \theta, \|\lambda\|_\infty \leq 1\bigg \}.
\end{align}
\end{df}
\noindent A couple of remarks are in order.
\begin{rem}
Notably, the polynomial component can represent non-convex shapes. This capability arises from the nonlinearities introduced by the polynomial terms and from the decoupling between the number of generators ($n$), the number of equality generators ($q$), and the dimension of the $\lambda$-space ($s$).\eor
\end{rem}
\begin{rem}
A polynomial zonotope (PZ) is a special case of a CPZ with no equality constraints, i.e., $F$ and $\theta$ are empty.  A constrained zonotope (CZ) is obtained when the polynomial component is linear, meaning $E$ and $R$ are identity matrices.  Finally, a zonotope (Z) is a particular case of a CZ with no equality constraints, that is, $F$ and $\theta$ are empty. \eor
\end{rem}
CPZ sets are closed under matrix multiplication and intersection. For the former, with the notation from \eqref{eq:con_poly_zono} and for some matrix $M\in \mathbb R^{m\times d}$ we have that 
\begin{equation}
\label{eq:cpz_matrix_mul}
    M\mathcal P = \langle Mc, MG, E, F, \theta, R\rangle_{CPZ}.
\end{equation}
% For the intersection case, consider the CPZ sets 
% $\mathcal{P}_i = \langle c_i, G_i, E_i, F_i, \theta_i, R_i\rangle_{CPZ} \in (\mathbb R^{d}, \mathbb R^{d \times n_i}, \mathbb N_0^{s_i \times n_i}, \mathbb \re^{p_i\times q_i}$, $\mathbb R^{p_i}, \mathbb N_0^{s_i \times q_i})$ for $i \in \{1, 2\}$, then their intersection is \cite[Proposition 3.2.23]{Kochdumper}
% \begin{multline}
% \label{eq:cpz_intersection}
% \mathcal{P}_1 \cap \mathcal{P}_2 = \bigg\langle
% c_1,\, G_1,\,
% \begin{bmatrix}
% E_1 \\[3pt]
% \mathbf{O}
% \end{bmatrix},\,
% \begin{bmatrix}
% F_1 & \mathbf{O} & \mathbf{O} & \hphantom{-}\mathbf{O} \\[3pt]
% \mathbf{O} & F_2 & \mathbf{O} & \hphantom{-}\mathbf{O} \\[3pt]
% \mathbf{O} & \mathbf{O} & G_1 & -G_2
% \end{bmatrix},\\
% \begin{bmatrix}
% \theta_1 \\[3pt]
% \theta_2 \\[3pt]
% c_2 - c_1
% \end{bmatrix},\,
% \begin{bmatrix}
% R_1 & \mathbf{O} & E_1 & \mathbf{O} \\[3pt]
% \mathbf{O} & R_2 & \mathbf{O} & E_2
% \end{bmatrix}
% \bigg\rangle_{CPZ},
% \end{multline}
% where $\mathbf{O}$ is the zero entry matrix of appropriate dimension.

\subsection*{Illustrative example}
\label{subsec:CPZ_example}
Consider the $\mathcal{CPZ}$ example (taken from \cite{kochdumper_constrained_2023}, slightly modified for illustrative purposes):
\begin{align*}
\mathcal{P} \;=\;
\bigg\langle
&\begin{bmatrix}0\\[4pt]0\end{bmatrix},
\;
\begin{bmatrix}
1 & 0 & 1 & -1\\[4pt]
0 & 1 & 1 & 1
\end{bmatrix},
\;
\begin{bmatrix}
1 & 0 & 1 & 2\\[4pt]
0 & 1 & 1 & 0\\[4pt]
0 & 0 & 1 & 1
\end{bmatrix},\\
&\qquad\qquad\qquad\quad[\,1\; 1\; 1\,],
\;
1.5,\;
\begin{bmatrix}
0 & 1 & 2\\[4pt]
1 & 0 & 0\\[4pt]
0 & 1 & 0
\end{bmatrix}
\bigg\rangle_{\text{CPZ}}\\
=
\bigg\{\mkern-6mu
&\begin{bmatrix}0\\[4pt]0\end{bmatrix}
\mkern-4mu+\mkern-4mu
\begin{bmatrix}1\\[4pt]0\end{bmatrix}\lambda_1
\mkern-4mu+\mkern-4mu
\begin{bmatrix}0\\[4pt]1\end{bmatrix}\lambda_2
\mkern-4mu+\mkern-4mu
\begin{bmatrix}1\\[4pt]1\end{bmatrix}\lambda_1\lambda_2\lambda_3
\mkern-4mu+\mkern-4mu
\begin{bmatrix}-1\\[4pt]\hphantom{-}1\end{bmatrix}\lambda_1^2\lambda_3
\;\hspace{-0.3em},\\
% &\begin{array}{l}
% \lambda_2 - 0.5\,\lambda_1\lambda_3 + 0.5\,\lambda_1^2 = 0.5,\\[4pt]
% \|\lambda_i\|_\infty \leq 1, \forall i \in \{1, 2, 3\}
% \end{array}
&\qquad\qquad\quad\lambda_2 + \,\lambda_1\lambda_3 + \,\lambda_1^2 = 1.5, \|\lambda\|_\infty \leq 1
\bigg\}.
\end{align*}
Figure~\ref{fig:different_zon} illustrates the original CPZ~$\mathcal{P}$ (solid red) alongside its simplified representations.  
The solid blue contour corresponds to the PZ case, obtained by ignoring the equality constraints.  The CZ and Z variants share the same generators but omit the polynomial component.  For the CZ case, the equality constraint is $\lambda_1 + \lambda_2 + \lambda_3 + 0 \cdot \lambda_4 = 1.5$, whereas in the Z case it is omitted entirely.  
Figure~\ref{fig:constr_zon} further shows the effect of constraining~$\lambda$: the hypercube domain is intersected with the polynomial (or hyperplane) constraint, yielding the CPZ and CZ representations, respectively.

% \begin{figure}[!ht]
% \centering
% \includegraphics[width=\columnwidth]{pics/CPZ_illustrative_example/illustrative_example_all.pdf}
% \caption{Example different zonotopic representations}
% \label{fig:different_zon}
% \end{figure}
\begin{figure}[!ht]
\centering
\subfloat[representation in $\mathbb R^2$]{\label{fig:different_zon}\includegraphics[width=.825\columnwidth]{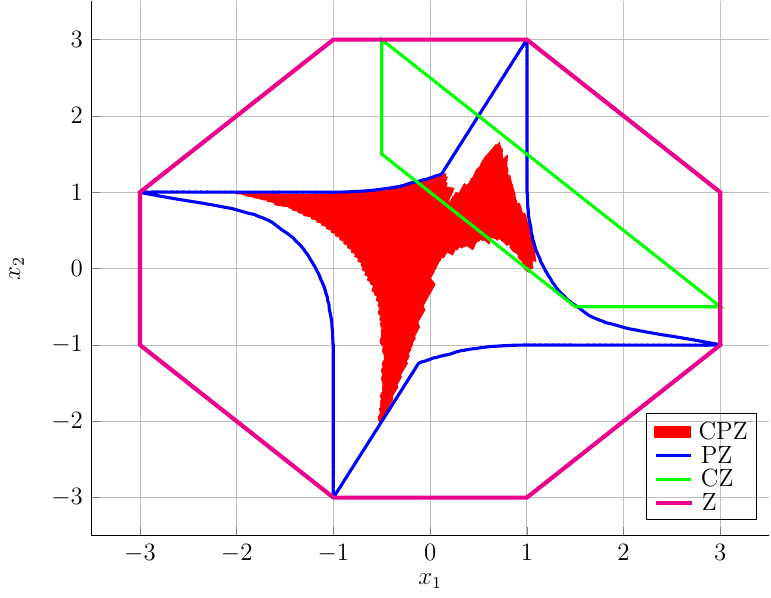}}\\
\subfloat[constraints for $\lambda$ (in $\mathbb R^3$)]{\label{fig:constr_zon}\includegraphics[width=.825\columnwidth]{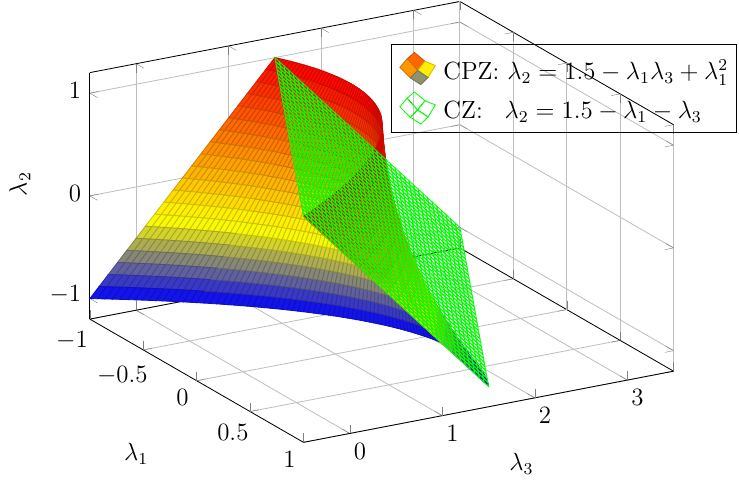}}
\caption{Different zonotopic representations}
\label{fig:mpi_bar_kq}
\end{figure}

% Fig.~\ref{fig:different_zon} showcases,  $\mathcal P$, the original CPZ (solid red) as well as simpler representations: solid blue denotes the PZ case (where the equality constraints are ignored). The CZ and Z use the same generators, without the polynomial component. The equality is taken as $\lambda_1 + \lambda_2 + \lambda_3 + 0 \cdot \lambda_4= 1.5$ for the CZ case and ignored in the Z case. Fig.~\ref{fig:constr_zon} illustrates the effect of constraining $\lambda$ (the hypercube inclusion, later intersected with the polynomial/hyperplane constraint that corresponds to the CPZ and CZ cases, respectively).

%% file: chapters/3_inclusion_conPolyZon.tex
\section{Inclusion conditions for constrained polynomial zonotopes}
\label{sec:inclusion_cpz}
Many control application, the computation of the MPI set included, require set inclusion checks. To this end, we first extend the results of~\cite{Sadraddini_Tedrake_2019} and our previous work~\cite{10552328} to derive sufficient conditions for CPZ inclusion.

\begin{prop}
\label{prop:cpz_inclusion}
Consider the constrained polynomial zonotopic sets 
$\mathcal{P}_i = \langle c_i, G_i, E_i, F_i, \theta_i, R_i\rangle_{CPZ} \in (\mathbb R^{d}, \mathbb R^{d \times n_i}, \mathbb N_0^{s_i \times n_i}, \mathbb \re^{p_i\times q_i}, \mathbb R^{p_i}, \mathbb N_0^{s_i \times q_i})$ for $i \in \{1, 2\}$, and matrices $\gamma \in \mathbb{R}^{n_2}$, $\Gamma \in \mathbb{R}^{n_2 \times n_1}$, $\Pi \in \mathbb{R}^{p_2 \times p_1}$, $\Psi \in \mathbb{R}^{q_2 \times q_1}$, $\psi \in \mathbb{R}^{q_2}$. Then, the conditions
\begin{subequations}
\label{eq:cpz_inclusion}
    \begin{align}
        \label{eq:cpz_inclusion_a}c_1 &= c_2+G_2\gamma,\\
        \label{eq:cpz_inclusion_b}G_1 &=G_2\Gamma,\\
        \label{eq:cpz_inclusion_c}\Pi F_1& =F_2\Psi,\\
        \label{eq:cpz_inclusion_d}\Pi \theta_1& =\theta_2-F_2\psi,\\
        % \label{eq:cpz_inclusion_e}\bbm E_2^\top\\[5pt]R_2^\top\ebm^\dagger\cdot \log \bbm \left|\gamma\right| + \left|\Gamma\right| \cdot \mathbf 1_{n_1}\\[5pt] \left|\psi\right| + \left|\Psi\right| \cdot \mathbf 1_{q_1}\ebm &\leq O_{2s_2 \times 1},
        \label{eq:cpz_inclusion_e} (E_2^\top)^\dagger \cdot \log\left( \left|\gamma\right| + \left|\Gamma\right| \cdot \mathbf 1_{n_1}\right) &\leq O_{s_2 \times 1},\\
        \label{eq:cpz_inclusion_f} (R_2^\top)^\dagger \cdot \log\left( \left|\psi\right| + \left|\Psi\right| \cdot \mathbf 1_{q_1}\right) &\leq O_{s_2 \times 1},
    \end{align}
\end{subequations}    
are sufficient to guarantee the inclusion $\mathcal{P}_1 \subseteq \mathcal{P}_2$.
\end{prop}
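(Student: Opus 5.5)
The plan is to argue pointwise, following the template of the linear encodings in~\cite{Sadraddini_Tedrake_2019,10552328}. Fix an arbitrary $x\in\mathcal P_1$ with parameter $\lambda_1$, $\|\lambda_1\|_\infty\le 1$, satisfying the equality constraint of $\mathcal P_1$. Write $m_1\in\mathbb R^{n_1}$ and $r_1\in\mathbb R^{q_1}$ for the vectors of monomials $\prod_k\lambda_1^{E_1(k,i)}(k)$ and $\prod_k\lambda_1^{R_1(k,i)}(k)$. Then $x=c_1+G_1m_1$ and $F_1r_1=\theta_1$. Since every monomial of a point of the unit hypercube has modulus at most one, $|m_1|\le\mathbf 1_{n_1}$ and $|r_1|\le\mathbf 1_{q_1}$.

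The goal is to exhibit $\lambda_2$ with $\|\lambda_2\|_\infty\le1$ whose monomial vectors $m_2,r_2$ in $\mathcal P_2$ satisfy the affine relations
\begin{equation*}
m_2=\gamma+\Gamma m_1,\qquad r_2=\psi+\Psi r_1 .
\end{equation*}
Once such a $\lambda_2$ exists, the linear part is routine.
\begin{itemize}
\item By \eqref{eq:cpz_inclusion_a}--\eqref{eq:cpz_inclusion_b}, $c_2+G_2m_2=c_2+G_2\gamma+G_2\Gamma m_1=c_1+G_1m_1=x$.
\item By \eqref{eq:cpz_inclusion_c}--\eqref{eq:cpz_inclusion_d}, $F_2r_2=F_2\psi+\Pi F_1r_1=F_2\psi+\Pi\theta_1=\theta_2$.
\end{itemize}
Hence $x\in\mathcal P_2$. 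This is the CZ argument of~\cite{10552328}, with the monomial vectors playing the role of the generator coefficients.

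The remaining task is to realize prescribed monomial vectors by a point of the hypercube, and this is where \eqref{eq:cpz_inclusion_e}--\eqref{eq:cpz_inclusion_f} enter. The target vectors are bounded entrywise by
\begin{equation*}
|\gamma+\Gamma m_1|\le b_E:=|\gamma|+|\Gamma|\mathbf 1_{n_1},\qquad |\psi+\Psi r_1|\le b_R:=|\psi|+|\Psi|\mathbf 1_{q_1}.
\end{equation*}
Taking logarithms of absolute values linearizes the monomial map: $\log|m_2|=E_2^\top\log|\lambda_2|$, and similarly for $R_2$. The natural candidate is the least-squares choice $\log|\lambda_2|=(E_2^\top)^\dagger\log b_E$. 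Condition \eqref{eq:cpz_inclusion_e} states exactly that this candidate has nonpositive entries, i.e.\ $|\lambda_2|\le1$. Condition \eqref{eq:cpz_inclusion_f} plays the same role for the constraint monomials.

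I expect this realization step to be the main obstacle, and it must be handled with care. Three gaps have to be closed.
\begin{itemize}
\item The pseudoinverse only reproduces $\log b_E$ when $\log b_E\in\operatorname{range}(E_2^\top)$. I would first assume $E_2^\top$ (resp.\ $R_2^\top$) has full row rank, or restrict to the case where the projection is exact.
\item One must pass from the worst-case bound $b_E$ to the actual, $\lambda_1$-dependent magnitudes $|\gamma+\Gamma m_1|$ and to their signs. Here I would shrink each coordinate of $\lambda_2$ monotonically, and choose signs coordinatewise, exploiting the parity of the exponents. Alternatively, I would argue componentwise that each target monomial value lies in the image of the hypercube under the corresponding monomial.
\item A single $\lambda_2$ must serve both $E_2$ and $R_2$ simultaneously. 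I would try to enforce this by working with the stacked matrix $\bigl[E_2\;\,R_2\bigr]^\top$, which corresponds to the commented combined form of the conditions.
\end{itemize}
If these compatibility issues cannot be resolved in general, I would state explicitly the additional structural assumption on $E_2,R_2$ under which the log-linear system is exactly solvable. The inclusion then follows from the previous paragraph.
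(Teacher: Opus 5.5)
Your linear part is correct and matches the paper's computation: once some $\lambda_2$ in the unit box realizes $m_2=\gamma+\Gamma m_1$ and $r_2=\psi+\Psi r_1$, conditions \eqref{eq:cpz_inclusion_a}--\eqref{eq:cpz_inclusion_d} give $x\in\mathcal P_2$. \textbf{The gap is the realization step, which you leave open, and it cannot be closed from \eqref{eq:cpz_inclusion_e}--\eqref{eq:cpz_inclusion_f}: the proposition is false as stated.} Take $d=1$, trivial constraints ($F_i=0$, $\theta_i=0$, $R_i=1$), $c_i=0$, $G_i=1$, $E_1=1$, $E_2=2$. Then $\mathcal P_1=[-1,1]$ and $\mathcal P_2=\{\lambda^2:|\lambda|\le1\}=[0,1]$. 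With $\gamma=\psi=0$ and $\Gamma=\Pi=\Psi=1$, conditions \eqref{eq:cpz_inclusion_a}--\eqref{eq:cpz_inclusion_d} hold, and \eqref{eq:cpz_inclusion_e}--\eqref{eq:cpz_inclusion_f} read $\tfrac12\log1\le0$ and $\log1\le0$. Yet $-1\in\mathcal P_1\setminus\mathcal P_2$. Here $E_2^\top$ is invertible with positive inverse, so the obstruction is purely the sign: a column of $E_2$ with only even entries gives a nonnegative monomial, and no coordinatewise sign choice helps.

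Your remedies for the magnitudes also fail, because shrinking one coordinate of $\lambda_2$ shrinks every monomial containing it. So the bound $b_E$ does not transfer to the actual targets.
\begin{itemize}
\item Take $E_2=\bigl[\begin{smallmatrix}1&1\\0&1\end{smallmatrix}\bigr]$, $G_2=I_2$, $c_2=0$, so $\mathcal P_2=\{(a,ab):|a|,|b|\le1\}$. Take $\mathcal P_1=\{(t,\tfrac12):|t|\le1\}$ with $\gamma=(0,\tfrac12)^\top$, $\Gamma=(1,0)^\top$. Then $(E_2^\top)^{-1}\log b_E=(0,-\log2)^\top\le0$, yet $(t,\tfrac12)\notin\mathcal P_2$ for $|t|<\tfrac12$, since points of $\mathcal P_2$ satisfy $|x_2|\le|x_1|$.
\item Take $E_2=[1\;\,1]$, $G_2=I_2$ (a diagonal segment), $\mathcal P_1=[-1,1]^2$, $\Gamma=I_2$, $\gamma=0$. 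Then $\log b_E=0\in\operatorname{range}(E_2^\top)$, so your exact-projection proviso holds, and each target is componentwise realizable. Still, $(\lambda_1(1),\lambda_1(2))$ is not jointly of the form $(t,t)$.
\end{itemize}

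The paper's proof takes the same route and does not resolve these points either. It postulates a $\lambda_2$ satisfying \eqref{eq:mu_ER}. It then passes from $E_2^\top\log|\lambda_2|=\log|\mu^E|\le\log b_E$ to $\log|\lambda_2|\le(E_2^\top)^\dagger\log b_E$ by citing full rank and monotonicity of $\log$. That inference needs $(E_2^\top)^\dagger E_2^\top=I$ and an entrywise nonnegative $(E_2^\top)^\dagger$. Monotonicity of $\log$ supplies neither, and the first bulleted example violates the latter. Existence of $\lambda_2$ is never addressed: range, signs, and simultaneous solvability of the $E_2$- and $R_2$-systems. Your three bullets pinpoint exactly these holes.

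A true statement needs extra hypotheses, and your argument must then use the actual targets $\mu=(\mu^E,\mu^R)$ rather than $b=(b_E,b_R)$. One sufficient set of hypotheses:
\begin{itemize}
\item $A=[E_2\;\,R_2]^\top$ has full row rank and $A^\dagger\ge0$;
\item $A \bmod 2$ has full row rank over $\mathbb F_2$;
\item $b>0$;
\item the condition is imposed in stacked form, $A^\dagger\log b\le0$.
\end{itemize}
Then $\ell=A^\dagger\log|\mu|$ satisfies $A\ell=\log|\mu|$ and $\ell\le A^\dagger\log b\le0$. The signs are fixed by a linear system over $\mathbb F_2$, and targets with zero entries follow by compactness of the unit box.
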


\begin{proof}
To prove that $\mathcal{P}_1 \subseteq \mathcal{P}_2$, we consider an arbitrary element $x \in \mathcal{P}_1$ and show, using~\eqref{eq:cpz_inclusion}, that $x \in \mathcal{P}_2$. Equivalently stated, for any admissible $\lambda_1$ there exists a feasible $\lambda_2$ such that the inclusion holds:
\begin{equation}
\label{eq:lambda_inclusion}
    \forall \lambda_1 \text{ s.t. }x \in \mathcal P_1, \exists \lambda_2\text{ s.t. } x \in \mathcal P_2.
\end{equation}
\noindent The following chain of inferences holds:
\begin{align*}
    x &= c_1 + \sum_{j=1}^{n_1} \left( \prod_{k=1}^{s_1} \lambda_{1}^{E_1(k,j)}(k) G_{1}(\cdot, j) \right) \\ 
    &\overset{\eqref{eq:cpz_inclusion_a}+\eqref{eq:cpz_inclusion_b}}{=}  
      c_2 + G_2 \gamma + \sum_{j=1}^{n_1} \left( \prod_{k=1}^{s_1} \lambda_{1}^{E_1(k,j)}(k) G_2 \Gamma(\cdot, j)\right) \\
      &= c_2 + \sum_{i=1}^{n_2}
      \left[\mkern-4mu \biggl( \gamma(i) +\mkern-4mu \sum_{j=1}^{n_1}\mkern-4mu 
      \biggl(\Gamma(i, j)\mkern-4mu \prod_{k=1}^{s_1} \lambda_{1}^{E_1(k,j)}(k) \biggr)\mkern-4mu\biggr) G_{2}(\cdot, i) \right].
\end{align*}  
Hence, and using \eqref{eq:lambda_inclusion}, we require that $\lambda_2$ verifies
\begin{equation}
\label{eq:equiv_lambda_cpz_gen}
    \gamma(i) + \sum_{j=1}^{n_1} 
    \biggl(\Gamma(i, j) \prod_{k=1}^{s_1} \lambda_{1}^{E_1(k,j)}(k) \biggr)
    = \prod_{k=1}^{s_2} \lambda_{2}^{E_2(k,i)}(k),
\end{equation}
for all $i \in \{1,2,\ldots,n_2\}$. Next, by left multiplying with $\Pi$ the equalities constraining $\lambda_1$ as in \eqref{eq:con_poly_zono}, they become
\begin{equation}
\label{eq:equiv_lambda_cpz2}
    \sum_{j=1}^{n_1}\bigg(\prod_{k=1}^{s_1} \lambda_{1}^{R_1{(k,j)}}(k) \bigg) \Pi F_{1}(\cdot, j) = \Pi \theta_1.
\end{equation}
Applying~\eqref{eq:cpz_inclusion_c} to the left-hand side (\textit{lhs}) of~\eqref{eq:equiv_lambda_cpz2}, we get
\begin{align*}
    \textit{lhs} &= \sum_{j=1}^{q_1}\bigg(\prod_{k=1}^{s_1} \lambda_{1}^{R_1{(k,j)}}(k) \bigg) F_{2}\Psi(\cdot, j) \\
    &= \sum_{i=1}^{q_2}\left[\sum_{j=1}^{q_1}\Psi(i, j)\bigg(\prod_{k=1}^{s_1} \lambda_{1}^{R_1{(k,j)}}(k) \bigg)F_2(\cdot, i)\right],
\end{align*}

\noindent which, combined with the application of~\eqref{eq:cpz_inclusion_d} in the right-hand side (\textit{rhs}) of~\eqref{eq:equiv_lambda_cpz2}, leads to
\begin{equation*}
    \sum_{i=1}^{q_2} \bigg[\psi(i) + \sum_{j=1}^{q_1}\Psi(i, j)\bigg(\prod_{k=1}^{s_1} \lambda_{1}^{R_1{(k,j)}}(k) \bigg)\bigg]F_2(\cdot, i) = \theta_2.
\end{equation*}
Similar to~\eqref{eq:equiv_lambda_cpz_gen}, this relation leads to a constraint on $\lambda_2$:
\begin{equation}
\label{eq:equiv_lambda_cpz_equal}
    \psi(i) +\mkern-4mu \sum_{j=1}^{q_1}\Psi(i, j)\mkern-4mu\bigg(\prod_{k=1}^{s_1}\mkern-4mu \lambda_{1}^{R_1{(k,j)}}(k)\bigg) = \prod_{k=1}^{s_2} \lambda_{2}^{R_2{(k,i)}}(k) 
\end{equation}
for all $i \in \{1,2,\dots,q_2\}$.

For convenience, denote the \textit{lhs} of~\eqref{eq:equiv_lambda_cpz_gen} with $\mu_{i_{1}}^E$, and the \textit{lhs} of~\eqref{eq:equiv_lambda_cpz_equal} with $\mu_{i_{2}}^R$, and we require\footnote{We replaced $i$ in the \textit{lhs} of both equations with $i_{1}$, and $i_{2}$ respectively.}:
\begin{equation}
\label{eq:mu_ER}
    \mu_{i_{1}}^E = \prod_{k=1}^{s_2} \lambda_{2}^{E_2(k,i_{1})}(k),\quad \mu_{i_{2}}^R = \prod_{k=1}^{s_2} \lambda_{2}^{R_2{(k,i_{2})}}(k).
\end{equation}
Applying the absolute value and logarithm operators successively in \eqref{eq:mu_ER} yields
\begin{align*}
    \log\left|\mu_{i_{1}}^E\right| &= \sum_{k=1}^{s_2} E_2(k,i_{1})\log|\lambda_{2}(k)| = E_2^\top(\cdot, i_{1}) \log|\lambda_2|,\\
    \log\left|\mu_{i_{2}}^R\right| &= \sum_{k=1}^{s_2} R_2(k,i_{2})\log|\lambda_{2}(k)| = R_2^\top(\cdot, i_{2}) \log|\lambda_2|,
\end{align*}
for all $i_{1} \in \{1,2,\dots,n_2\}$, and $i_{2} \in \{1,2,\dots,q_2\}$. Repeating for all indices $i_{1}$, and $i_{2}$ gives% and stacking the resulting expressions gives\footnote{Column vectors $\mu^E, \mu^R$ stack vertically scalars $\mu^E_i$, $\mu^R_i$ when iterating after $i$.} 
% \begin{equation}
%     \label{eq:log_mu_ER}
%     \bbm \log\left|\mu^E\right|\\[5pt]\log\left|\mu^R\right|\ebm = \bbm E_2^\top\\[5pt]R_2^\top\ebm \log|\lambda_2|.
% \end{equation}
\begin{equation}
    \label{eq:log_mu_ER}
    \log\left|\mu^E\right| = E_2^\top\log|\lambda_2|, \quad \log\left|\mu^R\right|=R_2^\top\log|\lambda_2|.
    % \mu_{i_{1}}^E = \prod_{k=1}^{s_2} \lambda_{2}^{E_2(k,i)}(k),\quad \mu_{i_{2}}^R = \prod_{k=1}^{s_2} \lambda_{2}^{R_2{(k,j)}}(k).
\end{equation}

Recalling that $\mu_{i_{1}}^E$ is the \textit{lhs} of~\eqref{eq:equiv_lambda_cpz_gen} and that it must hold for any $\|\lambda_1\|_\infty \le 1$, the triangle inequality gives 
\begin{align}
\label{eq:abs_mu_i}
    \bigl|\mu_{i_{1}}^E\big| &= \bigg|\gamma(i_{1}) + \sum_{j=1}^{n_1} \bigg( \Gamma(i_{1}, j) \prod_{k=1}^{s_1} \lambda_{1}^{E_1(k,j)}(k) \bigg) \bigg| \nonumber \\
    &\leq  \big|\gamma(i_{1})\big| + \bigg|\sum_{j=1}^{n_1} \bigg( \Gamma(i_{1}, j) \prod_{k=1}^{s_1} \lambda_{1}^{E_1(k,j)}(k) \bigg)\bigg| \nonumber \\
    &\leq \big|\gamma(i_{1})\big| +\sum_{j=1}^{n_1}  \left|\Gamma(i_{1}, j)\right|\cdot \underbrace{\bigg|\prod_{k=1}^{s_1} \lambda_{1}^{E_1(k,j)}(k)\bigg|}_{\leq 1}\nonumber \\[-10pt]
    &\leq \big|\gamma(i_{1})\big| + \sum_{j=1}^{n_1} \big|\Gamma(i_{1}, j)\big|.
\end{align}

Repeating \eqref{eq:abs_mu_i} for $\mu_{i_{2}}^R$, the \textit{lhs} of \eqref{eq:equiv_lambda_cpz_equal}, and for all indices $i_{1}\in \{1, \ldots, n_2\}$, $i_{2} \in \{1, \ldots, q_2\}$ we get:
\begin{equation}
\label{eq:abs_mu_E_R_eq}
    \left|\mu^E\right| \leq \left|\gamma\right| + \left|\Gamma\right| \cdot \mathbf 1_{n_1}, \quad
    \left|\mu^R\right| \leq \left|\psi\right| + \left|\Psi\right| \cdot \mathbf 1_{q_1}.
\end{equation}
Using \eqref{eq:abs_mu_E_R_eq} in \eqref{eq:log_mu_ER}, with the assumption of full-rank pseudo-inverses for $E_2^\top, R_2^\top$ and exploiting the monotonicity of the $\log$ operation, we have that
% \begin{equation}
%     \log|\lambda_2|
%     \leq \bbm E_2^\top\\[5pt]R_2^\top\ebm^\dagger\cdot \log \bbm \left|\gamma\right| + \left|\Gamma\right| \cdot \mathbf 1_{n_1}\\[5pt] \left|\psi\right| + \left|\Psi\right| \cdot \mathbf 1_{q_1}\ebm,
% \end{equation}
\begin{align}
    \log|\lambda_2| &\leq (E_2^\top)^\dagger \cdot \log\left( \left|\gamma\right| + \left|\Gamma\right| \cdot \mathbf 1_{n_1}\right),\\
    \log|\lambda_2| &\leq (R_2^\top)^\dagger \cdot \log \left( \left|\psi\right| + \left|\Psi\right| \cdot \mathbf 1_{q_1}\right).
\end{align}
which allows to state that \eqref{eq:cpz_inclusion_e} and \eqref{eq:cpz_inclusion_f} represent a sufficient condition for $\log|\lambda_2|\leq 0$ which implies that $|\lambda_2|\leq 1$ therefore $x\in \mathcal P_2$, thus concluding the proof. \qed
\end{proof}
Prop.~\ref{prop:cpz_inclusion} is next adapted to avoid the explicit use of the absolute operator in \eqref{eq:cpz_inclusion_e}--\eqref{eq:cpz_inclusion_f}, similar to \cite[Section III.B]{Gheorghe_Stoican_Prodan_2024}.
\begin{cor}
\label{cor:alpha_adaptation}
    With the notation of Prop.~\ref{prop:cpz_inclusion}, adding the auxiliary variables $\alpha_\Gamma\in \mathbb R^{2(n_1+1) \times n_2}$ and $\alpha_\Psi\in \mathbb R^{2(q_1+1) \times q_2}$ allows to reformulate \eqref{eq:cpz_inclusion_e}--\eqref{eq:cpz_inclusion_f} into the equivalent form:
        \begin{multline}
        \label{eq:alpha_gamma}
        (E_2^\top)^\dagger \cdot \log\left(\alpha_\Gamma^\top \mathbf 1_{2(n_1+1)}\right) \leq O_{s_2 \times 1},\\
        \bbm \Gamma & \gamma\ebm^\top=\bbm \mathrm I_{n_1+1}&-\mathrm I_{n_1+1}\ebm\alpha_\Gamma,\quad
        \alpha_{\Gamma}\geq 0,
    \end{multline}
and
    \begin{multline}
    \label{eq:alpha_psi}
        (R_2^\top)^\dagger \cdot \log\left(\alpha_\Psi^\top \mathbf 1_{2(q_1+1)}\right) \leq O_{s_2 \times 1},\\
        \bbm \Psi & \psi\ebm^\top=\bbm \mathrm I_{q_1+1}&-\mathrm I_{q_1+1}\ebm\alpha_\Psi,\quad
        \alpha_{\Psi}\geq 0.
    \end{multline}
\end{cor}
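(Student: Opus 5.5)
The plan is the standard positive/negative-part splitting of a real matrix, as in \cite[Section III.B]{Gheorghe_Stoican_Prodan_2024}. I treat \eqref{eq:alpha_gamma} in detail; \eqref{eq:alpha_psi} follows verbatim with $(\Psi,\psi,q_1,R_2)$ in place of $(\Gamma,\gamma,n_1,E_2)$.

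Write $M:=\bbm \Gamma & \gamma\ebm^\top\in\mathbb R^{(n_1+1)\times n_2}$. Partition $\alpha_\Gamma=\bbm \alpha^+\\ \alpha^-\ebm$ with $\alpha^\pm\in\mathbb R^{(n_1+1)\times n_2}$. The constraints in \eqref{eq:alpha_gamma} then read $M=\alpha^+-\alpha^-$ with $\alpha^\pm\geq 0$. Moreover,
\begin{equation*}
\alpha_\Gamma^\top\mathbf 1_{2(n_1+1)}=(\alpha^+ + \alpha^-)^\top\mathbf 1_{n_1+1}.
\end{equation*}
Note also that $|M|^\top\mathbf 1_{n_1+1}=|\gamma|+|\Gamma|\mathbf 1_{n_1}$, which is exactly the argument of the logarithm in \eqref{eq:cpz_inclusion_e}.

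The first step is the key entrywise inequality. For any feasible split, the triangle inequality gives
\begin{equation*}
|M(j,i)|=|\alpha^+(j,i)-\alpha^-(j,i)|\le \alpha^+(j,i)+\alpha^-(j,i).
\end{equation*}
Equality holds for the canonical choice $\alpha^+=\max(M,0)$ and $\alpha^-=\max(-M,0)$, taken entrywise. Summing over $j$ yields
\begin{equation*}
|\gamma|+|\Gamma|\mathbf 1_{n_1}\le \alpha_\Gamma^\top\mathbf 1_{2(n_1+1)},
\end{equation*}
with equality at the canonical split.

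The second step uses the canonical split. If \eqref{eq:cpz_inclusion_e} holds, choose $\alpha_\Gamma$ canonically. It is nonnegative, satisfies the linear equality, and makes the logarithm arguments coincide, so the first line of \eqref{eq:alpha_gamma} holds. This shows that \eqref{eq:cpz_inclusion_e} implies that \eqref{eq:alpha_gamma} is feasible.

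The third step is the converse, and it is the main obstacle. Suppose \eqref{eq:alpha_gamma} holds for some, possibly non-canonical, $\alpha_\Gamma$. The argument of the logarithm can only decrease when we pass to $|\gamma|+|\Gamma|\mathbf 1_{n_1}$, and $\log$ is monotone. However, $(E_2^\top)^\dagger$ may have negative entries, so the inequality does not transfer directly to \eqref{eq:cpz_inclusion_e}.

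I would bypass this by not going through \eqref{eq:cpz_inclusion_e} at all. Instead, I would re-enter the proof of Prop.~\ref{prop:cpz_inclusion} at \eqref{eq:abs_mu_E_R_eq}. The bound there still holds with $\alpha_\Gamma^\top\mathbf 1_{2(n_1+1)}$ on the right-hand side, since it is an upper bound on $|\gamma|+|\Gamma|\mathbf 1_{n_1}$. Repeating the remaining argument of that proof verbatim then gives $\log|\lambda_2|\le 0$, so \eqref{eq:alpha_gamma} still certifies the inclusion $\mathcal P_1\subseteq\mathcal P_2$. Hence the reformulation is equivalent in the sense that matters: \eqref{eq:cpz_inclusion_e} holds if and only if \eqref{eq:alpha_gamma} admits a feasible $\alpha_\Gamma$ (via the canonical split), and every feasible $\alpha_\Gamma$ remains sufficient for inclusion.

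If a literal implication from \eqref{eq:alpha_gamma} to \eqref{eq:cpz_inclusion_e} is required, I would first try to show that an optimizer can always be taken canonical. One can subtract $\min(\alpha^+,\alpha^-)$ entrywise from both parts, which preserves the equality and the nonnegativity. I would then note that this reduction is harmless exactly when the relevant entries of $(E_2^\top)^\dagger$ are nonnegative.
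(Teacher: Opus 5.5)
Your forward direction is correct and is the paper's argument in different notation. The paper writes the $i$-th row sum as $\|\zeta_i\|_1$ and uses the vertex description of the cross-polytope: $\|\zeta_i\|_1\le\nu$ iff $\zeta_i=\begin{bmatrix}\mathrm I&-\mathrm I\end{bmatrix}\alpha$ with $\alpha\ge0$ and $\mathbf 1^\top\alpha\le\nu$. That is exactly your positive/negative split, with $|M|^\top\mathbf 1\le\alpha_\Gamma^\top\mathbf 1$ and equality at the canonical split.

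You are also right that the converse does not follow. The paper treats the entries of $\alpha_\Gamma^\top\mathbf 1$ as thresholds $\nu$ to be substituted into \eqref{eq:cpz_inclusion_e}. This tacitly assumes that $x\mapsto(E_2^\top)^\dagger x$ is order-preserving. That assumption can fail, and with it the literal equivalence. Take $E_2=\begin{bmatrix}1&0&1\\0&1&1\end{bmatrix}$, so that $(E_2^\top)^\dagger=\tfrac13\begin{bmatrix}2&-1&1\\-1&2&1\end{bmatrix}$. Let $n_1=1$, $\Gamma=(1,\tfrac14,1)^\top$ and $\gamma=0$. The first entry of the left-hand side of \eqref{eq:cpz_inclusion_e} is then $\tfrac13\log 4>0$, so \eqref{eq:cpz_inclusion_e} fails. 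However, the split $\alpha^+=\begin{bmatrix}1&5/8&1\\0&0&0\end{bmatrix}$, $\alpha^-=\begin{bmatrix}0&3/8&0\\0&0&0\end{bmatrix}$ gives $\alpha_\Gamma^\top\mathbf 1_4=\mathbf 1_3$ and satisfies \eqref{eq:alpha_gamma}.

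The genuine gap is your step three. Write $w=|\gamma|+|\Gamma|\mathbf 1_{n_1}$ and $v=\alpha_\Gamma^\top\mathbf 1$. The ``remaining argument'' of the proof of Prop.~\ref{prop:cpz_inclusion} passes from $E_2^\top\log|\lambda_2|=\log|\mu^E|\le\log v$ to $\log|\lambda_2|\le(E_2^\top)^\dagger\log v$. It does so by left-multiplying an entrywise inequality by $(E_2^\top)^\dagger$, which is precisely the monotonicity you had just rejected.

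If that monotonicity holds (e.g.\ $(E_2^\top)^\dagger\ge0$ entrywise, as in the CZ case $E_2=I$), then $(E_2^\top)^\dagger\log w\le(E_2^\top)^\dagger\log v\le0$ already gives the literal converse, and the detour is unnecessary. If it does not hold, the detour breaks at the same line. That proof is unsafe to lean on for a further reason: taking $|\cdot|$ discards signs. With $c_1=c_2=0$, $G_1=G_2=1$, $E_1=[1]$, $E_2=[2]$ and no constraints, \eqref{eq:cpz_inclusion_e} holds with $\gamma=0$, $\Gamma=1$, yet $[-1,1]\not\subseteq[0,1]$.

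Your summary ``\eqref{eq:cpz_inclusion_e} holds iff \eqref{eq:alpha_gamma} admits a feasible $\alpha_\Gamma$'' is likewise only proved left to right. With $\alpha_\Gamma$ pinned to the canonical split it is a tautology, not the reformulation. What your argument actually establishes is this:
\begin{itemize}
\item \eqref{eq:cpz_inclusion_e} always implies \eqref{eq:alpha_gamma}.
\item The converse, and hence the claimed equivalence, holds under the extra hypothesis that $(E_2^\top)^\dagger$ and $(R_2^\top)^\dagger$ are entrywise nonnegative.
\item Without that hypothesis, \eqref{eq:alpha_gamma}--\eqref{eq:alpha_psi} is a strict relaxation of \eqref{eq:cpz_inclusion_e}--\eqref{eq:cpz_inclusion_f} and does not inherit its sufficiency.
\end{itemize}
State the result in that form rather than as an equivalence ``in the sense that matters''.
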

\begin{proof}
Consider the $i$-th row from the logarithmic argument in~\eqref{eq:cpz_inclusion_e}. We have 
\[
\left|\gamma_i\right| + \left|\Gamma(i,\cdot)\right|\mathbf{1}_{n_1} 
= \bigl\|\begin{bmatrix}\gamma_i & \Gamma(i,\cdot)\end{bmatrix}\bigr\|_1 
= \|\zeta_i\|_1.
\]
Hence, verifying $\|\zeta_i\|_1 \le \nu$ is equivalent to checking whether 
$\zeta_i \in \nu CR_{n_1+1}$, i.e., whether it lies within the scaled unit cross-polytope in $\mathbb{R}^{n_1+1}$ (the dual of the unit hypercube, \cite[Sec. 5]{fukuda_polyhedral_2020}). 
The hypercube admits $2^{n_1+1}$ supporting hyperplanes, requiring an equal number of inequalities to verify the inclusion.  
However, its vertex-based representation contains only $2(n_1+1)$ vertices.  
This observation allows $\zeta_i$ to be expressed as a convex combination of these vertices:
\[
\zeta_i = 
\begin{bmatrix} 
\mathrm{I}_{n_1+1} & -\mathrm{I}_{n_1+1} 
\end{bmatrix}
\alpha_\Gamma(i,\cdot),
\]
where $\alpha_\Gamma^\top(i,\cdot)\mathbf{1}_{2(n_1+1)} \le \nu$ and $\alpha_\Gamma(i,\cdot) \ge 0$.  
Repeating this reasoning for each index $i$ yields~\eqref{eq:alpha_gamma}, while applying the same argument to the logarithmic term in~\eqref{eq:cpz_inclusion_f} gives~\eqref{eq:alpha_psi}, thereby concluding the proof.
\qed
\end{proof}
\begin{rem}
Both~\eqref{eq:cpz_inclusion} from Proposition~\ref{prop:cpz_inclusion} and its adaptation~\eqref{eq:alpha_gamma}--\eqref{eq:alpha_psi} in Corollary~\ref{cor:alpha_adaptation} are nonlinear and appear within larger optimization problems.  
It is thus relevant to assess their respective problem sizes in terms of variables, equalities, and inequalities:
\begin{center}
\scriptsize
\setlength\tabcolsep{2.5pt}
\renewcommand*{\arraystretch}{1.25}
\begin{tabular}{c|p{2.75cm}|c|c}
case & ~~~~~~~\# vars. & \# eqs. & \# ineqs.\\\hline
Prop.~\ref{prop:cpz_inclusion} & \makecell{$n_2(n_1+1)+$\\$p_2p_1+q_2(q_1+1)$} & \makecell{$d(n_1+1)+$\\$p_2(q_1+1)$}& $2s_2$\\\hline
Cor.~\ref{cor:alpha_adaptation} & \makecell{$3n_2(n_1+1)+$\\$p_2p_1+3q_2(q_1+1)$} & \makecell{$(d+n_2)(n_1+1) +$\\$(p_2+q_2)(q_1+1)$} & \makecell{$2[s_2+n_2(n_1+1)+$\\$q_2(q_1+1)]$} \\
\end{tabular}
\setlength\tabcolsep{6pt}
\end{center}

Although Corollary~\ref{cor:alpha_adaptation} introduces slightly more variables, it benefits from simpler nonlinearities, involving only the logarithmic operator.
% Both \eqref{eq:cpz_inclusion} from Prop.~\ref{prop:cpz_inclusion} and the adaptation \eqref{eq:alpha_gamma}--\eqref{eq:alpha_psi} given in Cor.\ref{cor:alpha_adaptation} are both nonlinear and embedded in larger optimization problems. It is then worthwhile to asses the number of variables, equalities and inequalities:
% \begin{center}
% \small
% \setlength\tabcolsep{2.5pt} % default value: 6pt
% \renewcommand*{\arraystretch}{1.25}
%     \begin{tabular}{c|p{2.75cm}|c|c}
%     case & ~~~~~~~\# vars. & \# eqs. & \# ineqs.\\\hline
%     Prop.\ref{prop:cpz_inclusion} & $n_2(n_1 + 1) + p_2p_1 + q_2(q_1 + 1)$ & $d(n_1 + 1) + p_2(q_1 + 1)$ & $2s_2$\\\hline
%     Cor.\ref{cor:alpha_adaptation} & $1$ & $2$ & $3$\\
%     \end{tabular}
% \setlength\tabcolsep{6pt}
% \end{center}
% While Cor.~\ref{cor:alpha_adaptation} is larger but it counterbalances by the simpler nonlinearities involved (only the $\log$ operator). 
\eor
\end{rem}
Prop.~\ref{prop:cpz_inclusion} gives a nonlinear encoding for CPZ inclusion.  Previous results such as \cite[Prop.2]{10552328} handle the CZ case. The next result shows how the later may be obtained as a simplification of the former.

\begin{cor}
\label{cor:inclusion}
When $\mathcal P_1$, $\mathcal P_2$ reduce to constrained zonotopes, the inclusion test \eqref{eq:cpz_inclusion} reduces to
\begin{subequations}
\label{eq:cz_inclusion}
    \begin{align}
        \label{eq:cz_inclusion_a}c_1 &= c_2+G_2\gamma,
        &G_1&=G_2\Gamma\\
        \label{eq:cz_inclusion_c}\Pi F_1& =F_2\Gamma, &\Pi \theta_1 &=\theta_2-F_2\gamma,\\
        \label{eq:cz_inclusion_e}  \left|\gamma\right| + \left|\Gamma\right| \cdot \mathbf 1_{n_1} &\leq \mathbf 1_{s_2 \times 1}.
    \end{align}
\end{subequations}
\end{cor}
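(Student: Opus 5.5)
The plan is to specialise each condition of Prop.~\ref{prop:cpz_inclusion} to the CZ case, using the structure fixed by the CZ remark in Section~\ref{sec:preliminaries}. For a constrained zonotope, $E_i$ and $R_i$ are identity matrices. Hence $s_i=n_i=q_i$, and the equality generators share the factor space of $\lambda_i$ with the ordinary generators. Conditions \eqref{eq:cpz_inclusion_a}--\eqref{eq:cpz_inclusion_b} do not involve $E_i,R_i$, so they carry over verbatim and give the first line of \eqref{eq:cz_inclusion}.

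\emph{Step 1: identify the two coefficient maps.} With $E_2=R_2=\mathrm I_{n_2}$, the equalities \eqref{eq:equiv_lambda_cpz_gen} and \eqref{eq:equiv_lambda_cpz_equal} from the proof of Prop.~\ref{prop:cpz_inclusion} both prescribe $\lambda_2$ itself. The first gives $\lambda_2=\gamma+\Gamma\lambda_1$. The second gives $\lambda_2=\psi+\Psi\lambda_1$. For a single $\lambda_2$ to satisfy both for every admissible $\lambda_1$, the natural (sufficient) choice is $\Psi=\Gamma$ and $\psi=\gamma$. Substituting this choice into \eqref{eq:cpz_inclusion_c}--\eqref{eq:cpz_inclusion_d} yields the second line of \eqref{eq:cz_inclusion}. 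The dimensions are consistent because $q_i=n_i$.

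\emph{Step 2: simplify the logarithmic conditions.} Since $E_2^\top=\mathrm I$, its pseudoinverse is the identity. Condition \eqref{eq:cpz_inclusion_e} therefore becomes $\log(|\gamma|+|\Gamma|\mathbf 1_{n_1})\le 0$. By monotonicity of $\log$, this is equivalent to $|\gamma|+|\Gamma|\mathbf 1_{n_1}\le \mathbf 1$, which is \eqref{eq:cz_inclusion_e}; here $s_2=n_2$, so the dimension matches. Condition \eqref{eq:cpz_inclusion_f} becomes the identical inequality after substituting $\Psi=\Gamma$ and $\psi=\gamma$, so it is redundant and can be dropped.

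\emph{Main obstacle.} The hardest step is Step 1: justifying that identifying $(\Psi,\psi)$ with $(\Gamma,\gamma)$ is the correct reduction rather than a mere restriction. In the CPZ proof, \eqref{eq:mu_ER} silently requires the two expressions for $\lambda_2$ to coincide. The CPZ statement leaves $\Psi,\psi$ free, which is only coherent when the monomials indexed by $E_2$ and $R_2$ refer to a common $\lambda_2$. I would argue sufficiency directly instead. Under \eqref{eq:cz_inclusion}, the vector $\lambda_2:=\gamma+\Gamma\lambda_1$ satisfies $\|\lambda_2\|_\infty\le 1$ by the triangle inequality as in \eqref{eq:abs_mu_i}. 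It reproduces $x$ by the computation in the proof of Prop.~\ref{prop:cpz_inclusion}. It also satisfies $F_2\lambda_2=F_2\gamma+\Pi F_1\lambda_1=\theta_2-\Pi\theta_1+\Pi\theta_1=\theta_2$. This recovers \cite[Prop.2]{10552328} and confirms that the reduced test is sufficient.
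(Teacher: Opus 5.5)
Your proposal is correct, and the reduction itself is the paper's. Both arguments take identity exponents, use $(E_2^\top)^\dagger=(R_2^\top)^\dagger=\mathrm I$ with monotonicity of $\log$ to linearise \eqref{eq:cpz_inclusion_e}--\eqref{eq:cpz_inclusion_f}, and set $\Psi=\Gamma$, $\psi=\gamma$. That choice turns \eqref{eq:cpz_inclusion_c}--\eqref{eq:cpz_inclusion_d} into the second line of \eqref{eq:cz_inclusion} and makes \eqref{eq:cpz_inclusion_f} a copy of \eqref{eq:cpz_inclusion_e}.

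You part ways with the paper in the supporting argument. The paper adds the claim that, once $(\Gamma,\gamma)$ is fixed, $\Pi$ can always be recovered through a pseudo-inverse of $F_1$ augmented by $\theta_1$. You drop this, and rightly so: the reduction does not need it, and it fails in general. For $p_1<q_1+1$ the equation $\Pi\,[F_1\ \ \theta_1]=[F_2\Gamma\ \ \theta_2-F_2\gamma]$ is overdetermined in $\Pi$, so it is a genuine constraint of the test.

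Instead you certify the reduced test directly via $\lambda_2=\gamma+\Gamma\lambda_1$, without going through Prop.~\ref{prop:cpz_inclusion}. This is the more robust choice, because your concern about leaving $(\Psi,\psi)$ free is well founded. Take $\mathcal P_1=[-1,1]$ ($c_1=0$, $G_1=E_1=R_1=1$, $F_1=0$, $\theta_1=0$) and $\mathcal P_2=[0,1]$ ($c_2=0$, $G_2=[1\ \ 0]$, $F_2=[1\ \ 1]$, $\theta_2=1$, $E_2=R_2=\mathrm I_2$). All of \eqref{eq:cpz_inclusion} hold with $\gamma=0$, $\Gamma=(1,\tfrac12)^\top$, $\Pi=0$, $\Psi=0$, $\psi=(\tfrac12,\tfrac12)^\top$, although $\mathcal P_1\not\subseteq\mathcal P_2$. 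By contrast, \eqref{eq:cz_inclusion} is infeasible: it forces $\gamma(2)=1$ and $\Gamma(2,1)=-1$, which violates the last row of \eqref{eq:cz_inclusion_e}. So \eqref{eq:cz_inclusion} is a restriction of \eqref{eq:cpz_inclusion} rather than an equivalent form, and its soundness rests on exactly the direct check you supply.
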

\begin{proof}
Taking $E_1 = R_1 = I_{n_1}$ and $E_2 = R_2 = I_{n_2}$ for the CPZ sets from Prop.~\ref{prop:cpz_inclusion} reduces them to the CZ case. As a first effect, \eqref{eq:cpz_inclusion_e}--\eqref{eq:cpz_inclusion_f} have the equivalent linear forms
\begin{equation}
\label{eq:pi}
     \left|\gamma\right| + \left|\Gamma\right| \cdot \mathbf 1_{n_1}\leq \mathbf 1_{s_2}, \quad  \left|\psi\right| + \left|\Psi\right| \cdot \mathbf 1_{q_1}\leq \mathbf 1_{s_2}.
\end{equation}
In addition, the dimensions are now linked by $n_1=q_1=s_1$ and $n_2=q_2=s_2$, respectively.
Solving for \eqref{eq:cpz_inclusion_a}, \eqref{eq:cpz_inclusion_b} and \eqref{eq:cpz_inclusion_e} gives $\Gamma$, $\gamma$. Next, we observe that \eqref{eq:cpz_inclusion_c}--\eqref{eq:cpz_inclusion_d} may be used to obtain 
\begin{equation*}
    \Pi=\bbm F_2\Psi\\ \theta_2-F_2\psi\ebm\cdot \bbm F_1\\\theta_1\ebm^\dagger
\end{equation*}
where matrix $\Pi \in \mathbb R^{p_2\times p_1}$ is obtained as a function of $\Gamma,\gamma$, always possible since $\bbm F_1\\\theta_1\ebm\in \mathbb R^{(p_1+1)\times s_1}$ which admits\footnote{Otherwise, since $q_1=s_1$, the constraints over-determine the system and no solution exists.} a full-rank pseudo-inverse since $p_1+1\leq q_1=s_1$. Thus, by choosing $\Psi=\Gamma$ and $\psi=\gamma$ in \eqref{eq:pi} we have that a matrix $\Pi$ exists and that \eqref{eq:cpz_inclusion_f} reduces to \eqref{eq:cpz_inclusion_e}, thus concluding the proof. \qed
\end{proof}
\begin{rem}
    One may ask, conversely, why the variables $\Psi$ and $\psi$ are needed in the first place. The reason lies in the possible mismatch between the numbers of columns in $G_{1,2}$ and $F_{1,2}$, which are decoupled from the dimensions of the corresponding vectors $\lambda_{1,2}$. Simply put, $\Gamma$ and $\Psi$ may not have the same dimensions.\eor
\end{rem}

%% file: chapters/4_results.tex
\section{Simulation results for constrained polynomial zonotopes inclusion}
\label{sec:results}

We consider three CPZ sets defined as
\begin{equation*}
    \mathcal{P}_i=\langle c, G \cdot \delta_{G,i}, E, F \cdot \delta_{F,i}, \theta, R \rangle_{CPZ}, \; i\in \{1,2,3\},
\end{equation*}
where the matrices $c, G, E, F, \theta$, and $R$ are taken from the previous illustrative example, and the scaling factors $\delta_{G,i}$ and $\delta_{F,i}$ are given by
\begin{center}
\renewcommand{\arraystretch}{1.25}
\setlength\tabcolsep{3pt}
\begin{tabular}{c|c|c}
Index $i$ & $\delta_{G,i}$ & $\delta_{F,i}$ \\\hline
1 & $\diag(0.9,0.9,0.72,0.72)$ & $\diag(0.9,0.81,0.81)$ \\\hline
2 & $\diag(1,1,1,1)$ & $\diag(1,1,1)$ \\\hline
3 & $\diag(1.18,1.18,1.64,1.64)$ & $\diag(1.18,1.39,1.39)$ \\
\end{tabular}
\setlength\tabcolsep{6pt}
\end{center}
The sets are illustrated in Fig.~\ref{fig:CPZ_inclusion_test}, where $\mathcal{P}_1$ is shown with a blue contour, $\mathcal{P}_2$ as filled red, and $\mathcal{P}_3$ with a green contour.
\begin{figure}[!ht]
\centering
\includegraphics[width=\columnwidth]{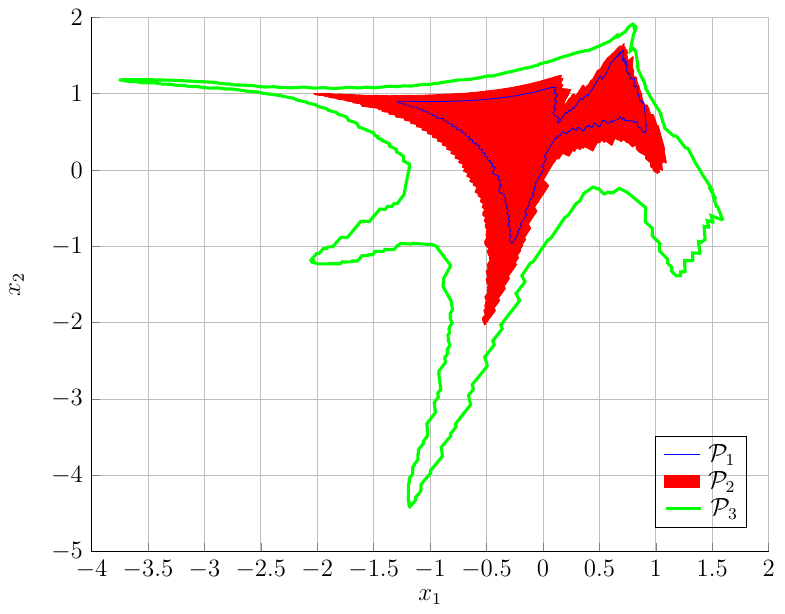}
\caption{CPZ sets used for the inclusion validation.}
\label{fig:CPZ_inclusion_test}
\end{figure}

We test each pair in both directions (i.e., $\mathcal{P}_i \subseteq \mathcal{P}_j$, and $\mathcal{P}_j \subseteq \mathcal{P}_i,\quad \forall i, j \in \{1, 2, 3\}, i \neq j$) with both our method (Corollary~\ref{cor:alpha_adaptation}) and by calling the \textsc{contains} function of CORA toolbox \cite{Althoff2015ARCH}. Our method validates the set inclusion iff \eqref{eq:cpz_inclusion_a}--\eqref{eq:cpz_inclusion_d} combined with \eqref{eq:alpha_gamma}--\eqref{eq:alpha_psi} returns a feasible result, as implemented with the help of the Yalmip toolbox \cite{lofberg2004yalmip}. CORA successively over-approximates the CPZ sets by polynomial zonotopic sets for which then checks the inclusion using \cite[Proposition 3.1.34]{Kochdumper}. 

For illustration, we depict in the next table the ``ground truth'', the results (inclusion validation), and computation time. Whenever the test function returns that the inclusion holds we put ``1'', and ``0'' otherwise.
\begin{center}
\renewcommand{\arraystretch}{1.25}
\begin{tabular}{c|c|c|c|c|c}
\hline
\multirow{2}{*}{case}& \multirow{2}{*}{Ground truth}& \multicolumn{2}{c|}{CORA}&\multicolumn{2}{c}{Corollary \ref{cor:alpha_adaptation}} \\ \cline{3-6}
&&test&time$[s]$&test&time$[s]$\\\hline
$\mathcal P_1\subseteq \mathcal P_2$& $\checkmark$ &0& 9.15  &1& 1.34\\\hline
$\mathcal P_2\subseteq \mathcal P_1$& $\times$ &0& 10.97 &0& 0.60\\\hline
$\mathcal P_1\subseteq \mathcal P_3$& $\checkmark$ &0& 10.55  &1& 0.83\\\hline
$\mathcal P_3\subseteq \mathcal P_1$& $\times$ &0& 9.60 &0& 0.58\\\hline
$\mathcal P_2\subseteq \mathcal P_3$& $\checkmark$ &0& 10.18  &1& 0.66\\\hline
$\mathcal P_3\subseteq \mathcal P_2$& $\times$ &0& 10.73  &0& 0.81\\
\end{tabular}
\end{center}
We observed reasonable computation times for our implementation and, most importantly, that our method has no false positives or negatives. The CORA method on the other hand, always fails to return a true positive. To the best of our understanding, this happens due to the pre-processing steps which inflate the sets later considered for inclusion testing. Moreover, within CORA, we noticed that the main sub-routine, \textsc{contract} \cite[Proposition 3.1.34]{Kochdumper}, of the \textsc{contains} method, takes roughly $3$ seconds for these examples, the rest of the time being spent in the prepossessing phase.

Simulations were carried out under Linux operating system, and the hardware platform has a 6 cores, 2.70GHz CPU, and 16GB of RAM.

The MATLAB implementation of Corollary~\ref{cor:alpha_adaptation} is stored in the sub-directory ``/replan-public/maximal-positive-invariant-set/ECC-2026'' of the Gitlab repository found at: \url{https://gitlab.com/replan/replan-public.git}.

%% file: chapters/5_conclusions.tex
\section{Conclusions}
\label{sec:conclusions}
In this paper we formulated sufficient inclusion conditions for the CPZ sets, which were later adapted to avoid using the absolute value operator, making the test run efficiently. The simulation for the inclusion test shows promising results, even compared with state-of-the-art toolboxes. Further work will use the inclusion for computing the maximal positive invariant set, in the constrained polynomial zonotopic case.